\begin{document}

\title*{From bee species aggregation to models of disease avoidance: The \emph{Ben-Hur} effect}
\author{K. E. Yong, E. D\'iaz Herrera, C. Castillo-Chavez}
\institute{K. E. Yong \at Mathematics/Science Subdivision, University of Hawai`i - West O`ahu, Kapolei, HI 96707, USA, \email{kamuela.yong@hawaii.edu}
\and E. D\'iaz Herrera \at Instituto Nacional De Salud P\'ublica, Universidad  655, Santa Mar\'ia Ahuacatitl\'an, 62100, Cuernavaca, Morelos. M\'exico \email{edgar.diaz@insp.mx}
\and C Castillo-Chavez \at Simon A. Levin Mathematical, Computational and Modeling Science Center, Arizona State University, Tempe, AZ 85287, USA, \email{ccchavez@asu.edu}}
%
%
\maketitle

\abstract{The movie \emph{Ben-Hur} highlights the dynamics of contagion associated with leprosy, a pattern of forced aggregation driven by the emergence of symptoms and the fear of contagion. The 2014 Ebola outbreaks reaffirmed the dynamics of redistribution among symptomatic and asymptomatic or non-infected individuals as a way to avoid contagion. In this manuscript, we explore the establishment of clusters of infection via density-dependence avoidance (diffusive instability). We illustrate this possibility in two ways: using a phenomenological driven model where disease incidence is assumed to be a decreasing function of the size of the symptomatic population and with a model that accounts for the deliberate movement of individuals in response to a gradient of symptomatic infectious individuals. The results in this manuscript are preliminary but indicative of the role that behavior, here modeled in crude simplistic ways, may have on disease dynamics, particularly on the spatial redistribution of epidemiological classes.}

\keywords{Ebola, leprosy, behavior epidemics, behavioral ecology}

\section{Introduction}

The effect that aggregation of susceptible and infected populations of individuals has on the basic reproduction number, $\mathcal{R}_0$, and the \emph{final size} has been studied by various researchers (see \cite{Adler-kretzschmar1992, adler1992, Andersson1998, Bichara2015sis, Brauer2008, Fenichel2011, Morin2013}). The effect of aggregation on $\mathcal{R}_0$ and the final outbreak size is not necessarily the same as a small core group with a high activity level can substantially contribute to $\mathcal{R}_0$ while having little impact on the final outbreak size \cite{Diekmann2012}. O. Diekmann et al. \cite{Diekmann1991} showed that aggregation of susceptible and infective individuals reduces the number of groups required to capture the dynamics of a large system provided that one assumes identical levels of infectivity for all groups. These researchers also observed that increased levels of aggregation may lead to lower values of $\mathcal{R}_0$ \cite{Brauer2008, Diekmann1991}.

Spatial transmission of diseases has been studied by various researchers \cite{Riley2007, Kermack1932, Murray2002_1, Murray2002_2, Thieme2003}, often using reaction diffusion equations (see \cite{Berres2011, Carrero2003, Castillo1996, Castillo2008, Li2008, Nallaswamy1982, Sun2009, Wang2010}). In this paper, two novel reaction-diffusion models are introduced that model the spread of a communicable disease when the presence of symptoms reduces contacts among all types and, in the process, ameliorates disease spread (Model \eqref{stage}). We also examine the impact that the movement of individuals, in response to gradients of symptomatic infectious individuals modeled via cross-diffusion (Model \eqref{general system}), has on disease dynamics. This paper is organized as follows: Section \ref{Edgar'sModel} introduces a phenomenological model and identifies conditions for clustering via diffusive instability; Section \ref{cross-diffusion} examines the role of cross-diffusion on epidemiological spatial aggregation;  Section \ref{Discussion and Conclusion} collects thoughts and conclusions.

\section{Phenomenological model}\label{Edgar'sModel}
Epidemics are capable of generating shifts on population level interactions possibly as a function of the presence of growing levels of severe infection as reflected by the impact of symptomatic populations \cite{Castillo2002, Hadeler1995core, Heiderich} on the contacts between individuals and survival. A simple epidemiological model that accounts for reductions in transmission as the size of the symptomatic population increases is described below motivated by observed disease patterns in leprosy \cite{Balina1994,Robbins2009,Rodrigues2011}, Ebola \cite{Chowell2004,Kiskowski2014,Nishiura2014,Towers2014}, and influenza \cite{Rios2011}. We let $S(x,y,t)$ denote the susceptible population at time $t$ and position $(x,y)$, and divide the infected population in two groups, a group that exhibits symptoms and a group that does not, the ``asymptomatic" infectious group. Specifically, we let $I_1(x,y,t)$ denote the symptomless infectious population, assumed to be infectious, and let $I_2(x,y,t)$ denote the infected population with visible symptoms. The incidence term in a susceptible-infectious-susceptible (SIS) type model is modified by the addition of spatial diffusion to each class under the assumption that the symptomatic class, that is, $I_2$-members are in principle, to be avoided. The model equations are given by the following phenomenologically derived reaction-diffusion epidemiological model:
\begin{equation}\label{stage}
\begin{array}{rcl}
\frac{\partial S}{\partial t} & = & -\frac{\beta}{1+I_2} SI_1 +\alpha I_2 +D_S \nabla^2S, \\
\frac{\partial I_1}{\partial t} & = & \frac{\beta}{1+I_2} SI_1 - \delta I_1 + D_{I_1} \nabla^2I_, \\ 
\frac{\partial I_2}{\partial t} & = & \delta I_1 - \alpha I_2 +D_{I_2} \nabla^2I_2,
\end{array}
\end{equation}
where $\nabla^2=\Delta=\partial^2/\partial x^2+\partial^2/\partial y^2$, the Laplace operator. Setting $I_2=0$ leads to the ``standard'' $SIS$ system with diffusion \cite{Kermack1932}. The incidence term gets altered by assuming that all contacts decrease with the size of the $I_2$-population, that is, the incidence is modeled as follows:
\begin{equation}
\frac{\beta}{1+I_2} SI_1.
\end{equation}
The question posed in \cite{Diaz2010} is whether or not System \eqref{stage} can support non-uniform distributions via diffusive instability. The assumption of constant population size implies, without loss of generality, that we can take $S\equiv1-I_1-I_2$, a substitution that allows us to focus on the equations for $I_1$ and $I_2$. We observe that System \eqref{stage} supports the following positive steady states in the absence of diffusion ($D_S=D_{I_1}=D_{I_2}=0$):
\begin{equation*}
(I_1^*,I_2^*) = \left(\frac{\alpha(\beta-\delta)}{\beta \alpha+ \beta \delta + \delta^2}, \frac{\delta(\beta-\delta)}{\beta \alpha+ \beta \delta + \delta^2} \right),
\end{equation*}
from where we identify the basic reproductive number as
\begin{equation*}
\mathcal{R}_0= \frac{\beta}{\delta}.
\end{equation*}
The effects of small perturbations  of the  $(I_1^*, I_2^*)$-equilibrium are introduced via the following variables:
\begin{equation}\label{perturbations}
\ell_i(x,y,t) = I_i(x,y,t) -  I_i^* \,, \qquad i=1,2\,.
\end{equation}
Substituting \eqref{perturbations} into the last two equations of System (\ref{stage}) leads, after ignoring higher order terms, to the following linearized system
\begin{equation}\label{stagelinear}
\begin{aligned}
\frac{\partial \ell_1}{\partial t}  = & J_{11} \ell_1 + J_{12} \ell_2 + D_{I_1} \nabla ^2 \ell_1, \\
\frac{\partial \ell_2}{\partial t} =& J_{21}\ell_1 + J_{22}\ell_2 + D_{I_2} \nabla ^2 \ell_2,
\end{aligned}
\end{equation}
where the matrix $(J_{ij})$ is the Jacobian of System (\ref{stage}) in the absence of diffusion evaluated at the equilibrium $(I_1^*,I_2^*)$, namely
\begin{equation}\label{A-matrix}
J = (J_{ij}) = \left (
\begin{array}{cc}
 \frac{\alpha(\delta - \beta)}{\alpha+2\delta}  & \frac{\alpha(\alpha^2 - \beta^2)}{\beta(\alpha+2\delta)}  \\
   \delta & -\alpha   \\
\end{array}
\right ).
\end{equation}
The three conditions that guarantee diffusive instability (\cite{Segel1972}) are given by the following inequalities: 
\begin{eqnarray}
J_{11} + J_{22}  &<&0\,, \label{Turing_no_cross-diffusion1}\\
J_{11} J_{22} - J_{12} J_{21}& > & 0 \label{Turing_no_cross-diffusion2}, \\
J_{11}D_{I_2} + J_{22} D_{I_1} &> & 2\sqrt{D_{I_1}D_{I_2}(J_{11} J_{22} - J_{12} J_{21})}. \label{Turing_no_cross-diffusion3}
\end{eqnarray}
Condition \eqref{Turing_no_cross-diffusion1} always holds, since
\begin{equation*}
J_{11} + J_{22} = - \alpha \left( \frac{\delta + \beta + \alpha }{\alpha + 2 \delta}\right) < 0.
\end{equation*}
Condition \eqref{Turing_no_cross-diffusion2} is satisfied provided that
\begin{equation*}
J_{11} J_{22} - J_{12} J_{21} = \frac{\beta\alpha^2(\beta - \delta) + \delta\alpha(\alpha^2 -\beta^2)}{\beta(\alpha+2\delta)}
\end{equation*}
is positive, which is true as long as
\begin{equation*}
\beta^2(\alpha +\beta)> \alpha\delta (\beta+\alpha),
\end{equation*}
or, equivalently as long as $\mathcal{R}_0=\frac{\beta}{\delta}>1$ and $\frac{\beta}{\alpha}>1$. Now, we make use of the fact that Condition \eqref{Turing_no_cross-diffusion3} is equivalent to the inequality
\begin{equation}\label{Turing_no_cross-diffusion3_alternate}
J_{11} J_{22} - J_{12} J_{21} - \frac{1}{4D_{I_1}D_{I_2}}\left(J_{11}D_{I_2} + J_{22} D_{I_1}\right)^2 < 0 .
\end{equation}
After substituting the corresponding values from Equation \eqref{A-matrix} we see that whenever the following inequality
 \begin{equation}\label{con3}
2D_{I_1}D_{I_2}\frac{\alpha+2\delta}{\alpha \beta} \left[ \beta^2(\alpha+ 2 \delta) - \alpha \delta (\beta + 2 \alpha) \right] - D_{I_1}^2 (\delta-\beta) - D_{I_2}^2 (\alpha + 2 \delta)^2  < 0,
\end{equation}
is satisfied, Condition \eqref{Turing_no_cross-diffusion3_alternate} is satisfied. Using $\mathcal{R}_0 > 1$ leads to
\begin{equation}\label{E1}
\frac{\delta}{\beta}(\beta+2\alpha)  <  \beta + 2\delta;
\end{equation}
while $ \alpha <\beta$ leads to
\begin{equation}\label{E2}
-\frac{\beta}{\alpha}(\alpha+2\delta)  <  -(\alpha +2\delta). 
\end{equation}
The addition of Conditions \eqref{E1}-\eqref{E2} leads to the inequality
\begin{equation}
\frac{\delta}{\beta}(\beta+2\alpha) - \frac{\beta}{\alpha}(\alpha+2\delta) <\beta -\alpha .
 \end{equation}
Thus, we conclude that Condition \eqref{Turing_no_cross-diffusion3} (Inequality (\ref{con3})) holds as long as 
\begin{equation}
\mathcal{R}_0= \frac{\beta}{\delta}>1 \ \ \ \mbox{and} \ \ \ \ \frac{\beta}{\alpha}>1 \label{phenomenological instability conditions}
\end{equation}
The main conclusion of this section can be stated as follows:
\begin{theorem}
 The linear System (\ref{stagelinear}) satisfies necessary and sufficient conditions for diffusive instability whenever $\mathcal{R}_0>1$ and $\frac{\beta}{\alpha}>1$. In other words, diffusive instability takes place when the endemic state exists ($\mathcal{R}_0>1$) and $I_2$ individuals are not infectious for too long.
\end{theorem}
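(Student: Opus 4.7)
The plan is to verify the three Segel--Jackson conditions \eqref{Turing_no_cross-diffusion1}--\eqref{Turing_no_cross-diffusion3} for diffusion-driven instability at the endemic equilibrium $(I_1^*,I_2^*)$, showing they collectively reduce to the stated inequalities $\mathcal{R}_0>1$ and $\beta/\alpha>1$. Since the Jacobian $J$ of the reaction part at the equilibrium is already recorded in \eqref{A-matrix}, the argument amounts to three explicit computations followed by careful inequality bookkeeping.

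I would first dispose of the trace condition \eqref{Turing_no_cross-diffusion1}. Substituting the entries of $J$ and simplifying yields a sum proportional to $-\alpha(\alpha+\beta+\delta)/(\alpha+2\delta)$, manifestly negative for any positive parameter values, so this inequality holds unconditionally and places no constraint on the hypotheses.

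Next I would tackle the determinant condition \eqref{Turing_no_cross-diffusion2}. Expanding $J_{11}J_{22}-J_{12}J_{21}$ and factoring out the common denominator $\beta(\alpha+2\delta)$ leaves a numerator of the form $\beta^2(\alpha+\beta)-\alpha\delta(\beta+\alpha)$; positivity of this quantity is implied by the pair of hypotheses $\mathcal{R}_0=\beta/\delta>1$ and $\beta/\alpha>1$, which together force $\beta^2>\alpha\delta$.

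The main obstacle is the diffusion condition \eqref{Turing_no_cross-diffusion3}. I would rewrite it in the equivalent quadratic form \eqref{Turing_no_cross-diffusion3_alternate}, substitute the Jacobian entries, and clear denominators to arrive at the polynomial inequality \eqref{con3} in $D_{I_1}, D_{I_2}, \alpha, \beta, \delta$. The hardest part is this algebraic reduction itself: \eqref{con3} is a quadratic form in the diffusion coefficients whose coefficients carry signs depending on $\beta-\delta$ and $\beta-\alpha$, so keeping track of orientations is delicate. Once the reduction is done, the strategy is to split the two hypotheses into the linear bounds \eqref{E1} (from $\mathcal{R}_0>1$) and \eqref{E2} (from $\beta>\alpha$), add them to control the bracket $\beta^2(\alpha+2\delta)-\alpha\delta(\beta+2\alpha)$, and thereby force the left-hand side of \eqref{con3} to be negative, closing the proof.
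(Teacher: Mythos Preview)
Your plan mirrors the paper's argument step for step: dispose of the trace condition \eqref{Turing_no_cross-diffusion1} as automatic, reduce the determinant condition \eqref{Turing_no_cross-diffusion2} to $\beta^2>\alpha\delta$ (ensured by the two hypotheses), and handle \eqref{Turing_no_cross-diffusion3} by passing to the squared form \eqref{Turing_no_cross-diffusion3_alternate}, clearing denominators to obtain \eqref{con3}, and then combining the linear bounds \eqref{E1} and \eqref{E2}. This is exactly the route the paper takes, so there is no methodological difference to report.
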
 
The steady state non-uniform distribution of infected individuals (symptomatic and asymptomatic) loses stability due small perturbations of the form

\begin{equation}
\ell_i (x,t) = \alpha_i \cos (qx) e^{\sigma t}, \qquad i=1,2.
 \end{equation}
The present analysis works as long as the perturbations are sufficiently small to make the linear approximation (Model \eqref{stagelinear}) a valid representation of the truly nonlinear representation  
 of Model \eqref{stage}. When the perturbations have been amplified beyond a small size, the analysis is not longer adequate.
As a result of the above analysis, we expect that an initial spatially distributed population, will begin to ``break up" and aggregate according to the presence or absence of symptoms. See Figure \ref{phenomenological1}-\ref{phenomenological3} generated via the simulations carried out under Condition \eqref{phenomenological instability conditions}. We see that aggregation occurs faster if the difference in diffusion rates is large for both the linear Model \eqref{stagelinear} (see Figure \ref{phenomenological1}-\ref{phenomenological2}) and nonlinear Model \eqref{stage} (see Figure \ref{phenomenological3}). 

\begin{figure}
   \includegraphics[width=5.0in]{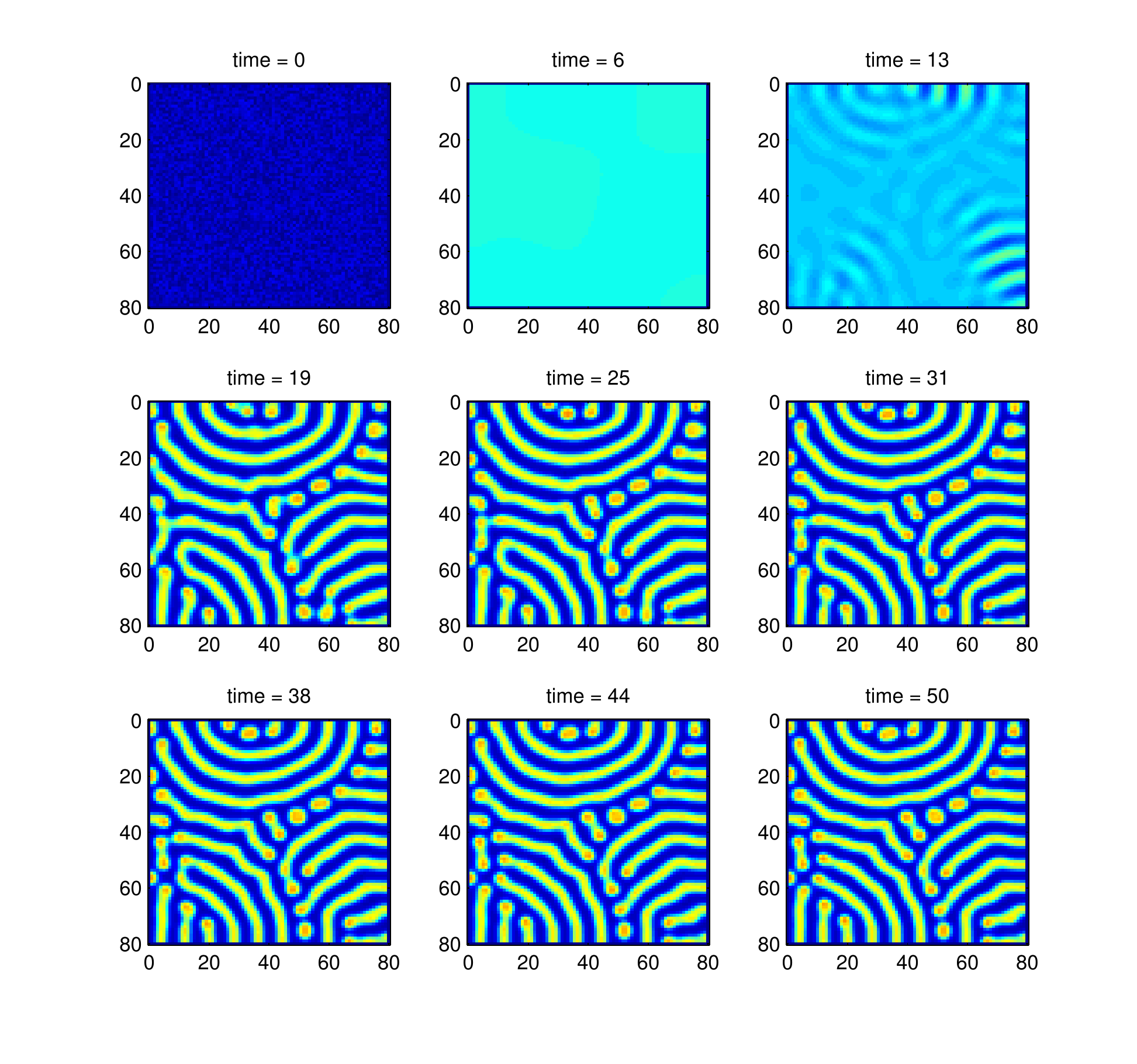}
 \caption{Spatial aggregation for Model \eqref{stagelinear} occurs quickly when the difference between diffusion rates is large.  ($D_{I_1}=10$ and $D_{I_2}=80$.  $\alpha =0.05, \beta = 0.13$ and $\delta = 1.3$, chosen so that Condition \eqref{phenomenological instability conditions} is satisfied).  \label{phenomenological1}}
\end{figure}

\begin{figure}
   \includegraphics[width=5.0in]{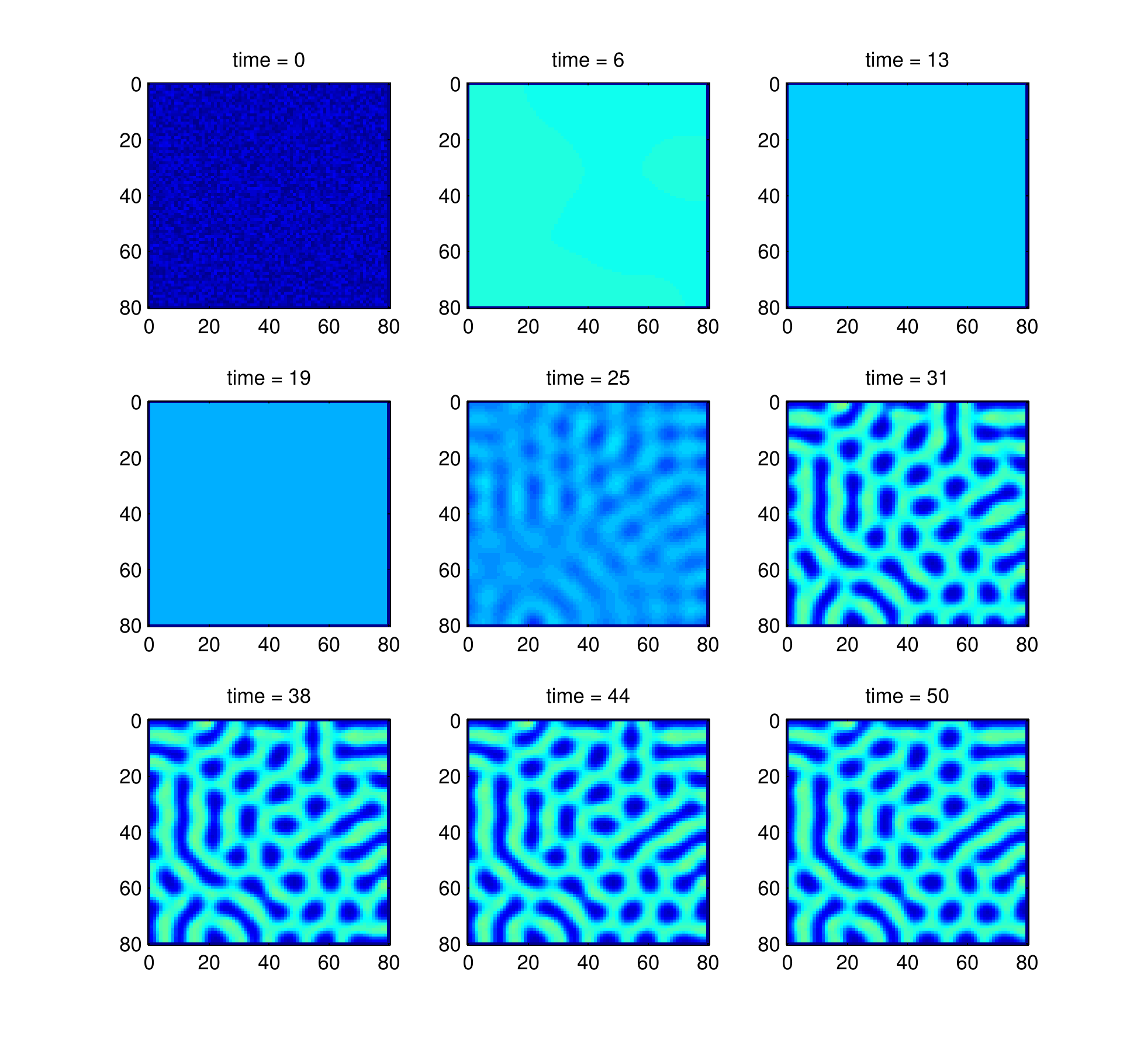}
 \caption{Spatial aggregation for Model \eqref{stagelinear} occurs slowly when the difference between diffusion rates is small.  ($D_{I_1}=10$ and $D_{I_2}=20$.  $\alpha =0.05, \beta = 0.13$ and $\delta = 1.3$, chosen so that \eqref{phenomenological instability conditions} is satisfied).   \label{phenomenological2}}
\end{figure}

\begin{figure}
 \begin{tabular}{cc}
   \includegraphics[width=3.0in]{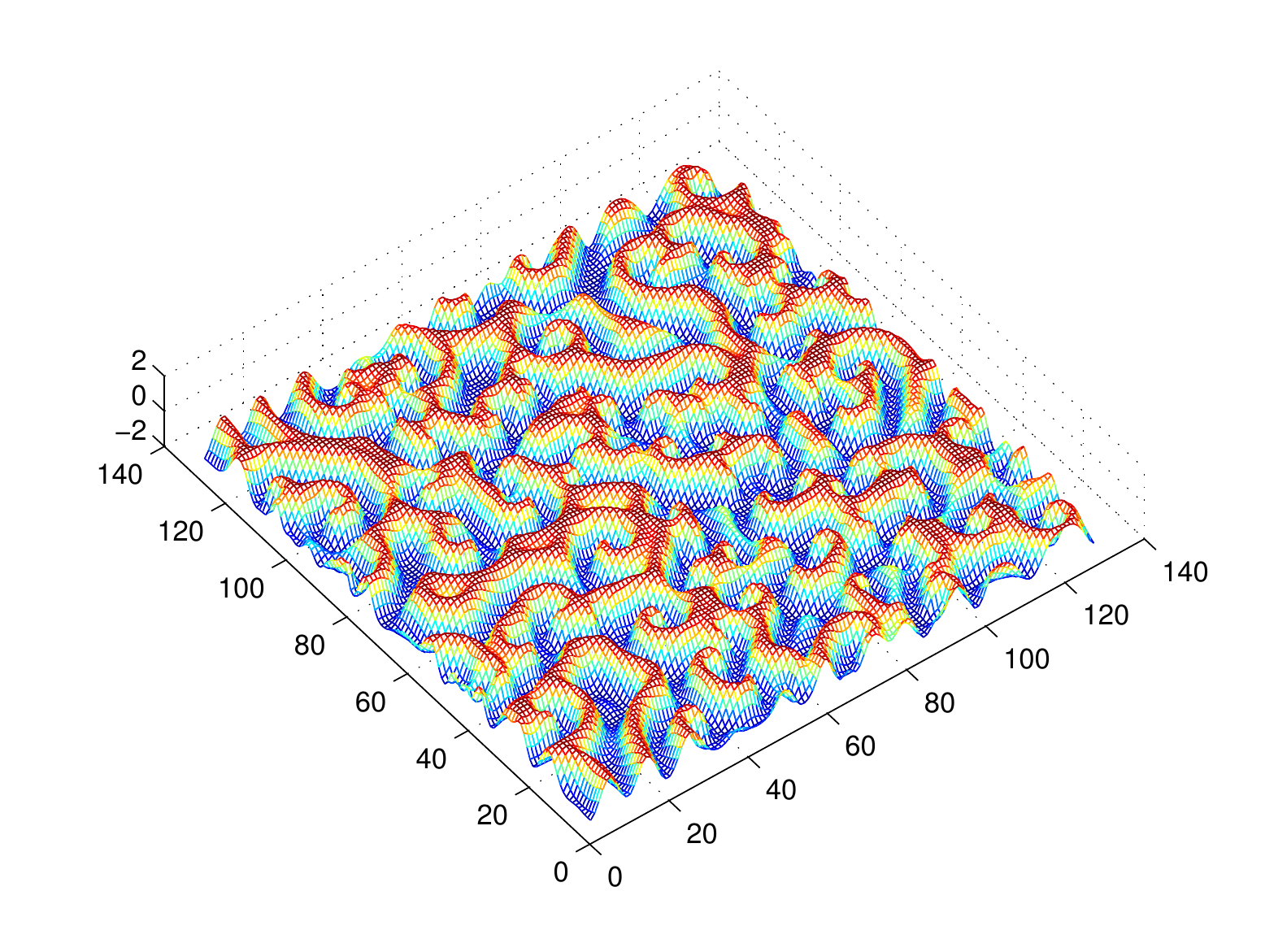}
   \includegraphics[width=3.0in]{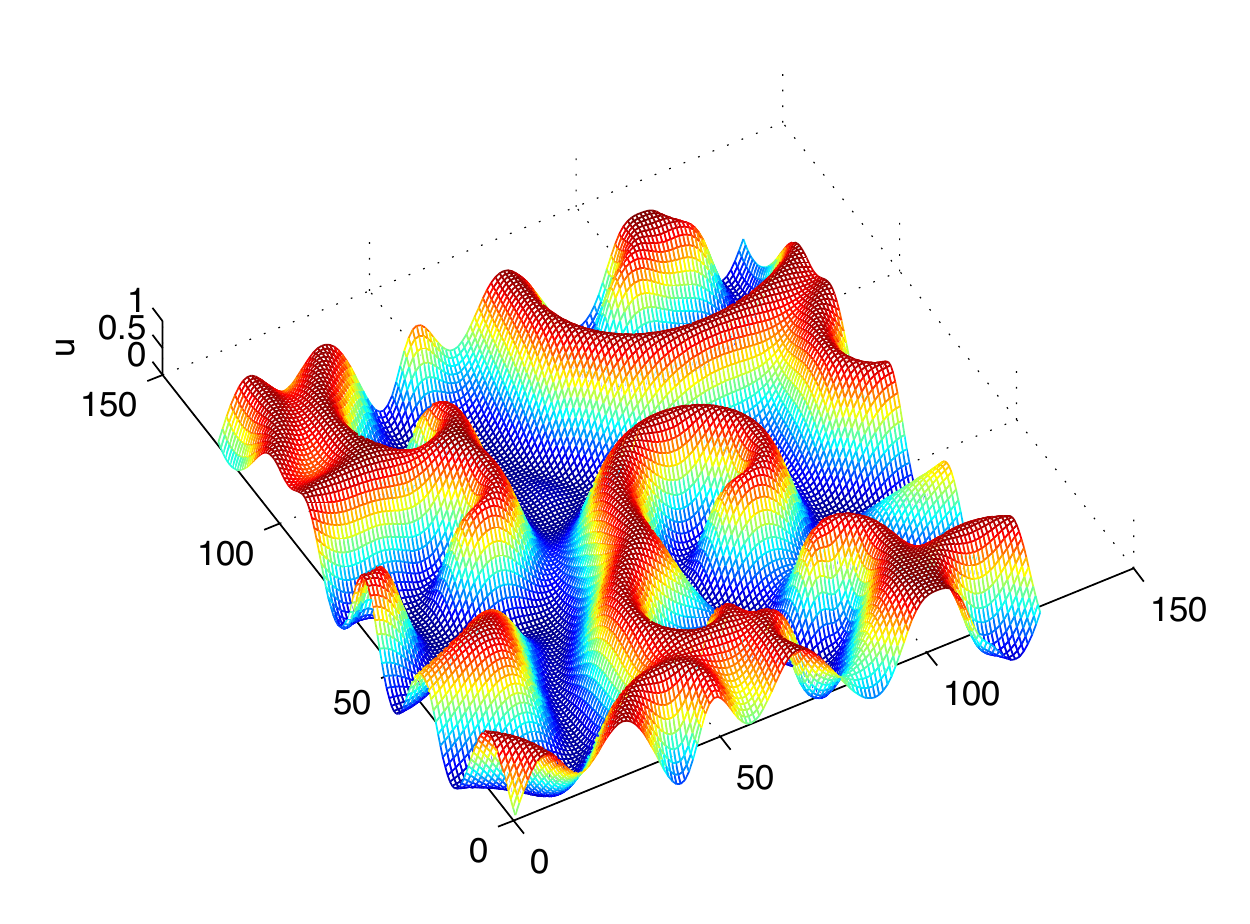}
 \end{tabular}
\caption{When the difference between diffusion rates, $D_{I_1}$ and $D_{I_2}$ from Model \eqref{stage} is large (left)
aggregation occurs faster than when the difference between diffusion rates is small (right). 
Not linear model, we use the same parameters from model (\ref{stagelinear}), except for a bigger $\beta$
( $\alpha =0.05, \beta = 1$ and $\delta = 1.3$, chosen so that \eqref{phenomenological instability conditions} is satisfied). 
\label{phenomenological3}}
\end{figure}

\section{Cross-diffusion models}\label{cross-diffusion}

The dynamics of solitary and honey bees and their role in enhancing cross-pollination in California almond tree farms was studied via a cross-diffusion model in \cite{Yong2012}. The model for the interaction of honey bees, $u_1(x,y,t)$, and solitary bees, $u_2(x,y,t)$ at time $t$ and position $(x,y)\in\Omega$, proposed in \cite{Yong2012}, is given by the system:
\begin{equation}\begin{aligned}
\frac{\partial }{\partial {t}}u_{i}&=\nabla^2\left( \alpha _{i}+\beta _{i1}u_{1}+\beta _{i2}u_{2}\right) u_{i}
+\gamma _{i}\nabla\cdot\left(u_{i}\nabla {W}\right) & \mbox{in}~\Omega\times(0,T),\\
u_i(x,y,0)&= \xi_i(x,y) & \mbox{on}~\Omega\times\{t=0\},\\
\frac{\partial{u}_i}{\partial\nu}&=0 & \mbox{on}~\partial\Omega\times(0,T),
\end{aligned}\label{SKT}
\end{equation}
where $\alpha_i\geq0$ represents the intrinsic diffusion, $\beta_{ij}\geq0$ represents the self-diffusion for $i=j$ and cross-diffusion for $i\neq j$, $W=W(x,y,t)$ represents the environmental potential, and $\gamma_i\in\mathbb{R}$ is the coefficient associated with $W$. The dynamics of avoidance between honey and solitary bees was captured by the addition of cross- and self-diffusion terms to the model in \cite{Shigesada1979}. Numerical simulations were used to show that cross-diffusion was  indeed capable of capturing the observed spatial aggregation of individuals by species. The resulting spatial aggregating of bees by species, as a result of a strong cross-diffusion ($\beta_{12})$, is illustrated in Figure \ref{bees}. This figure shows that in areas of high solitary bee density ($u_2$) result in low honey bee density ($u_1$) and in areas of low solitary bee densities result in honey bees aggregating in high densities.

\begin{figure}
 \begin{tabular}{cc}
   \includegraphics[width=3.0in]{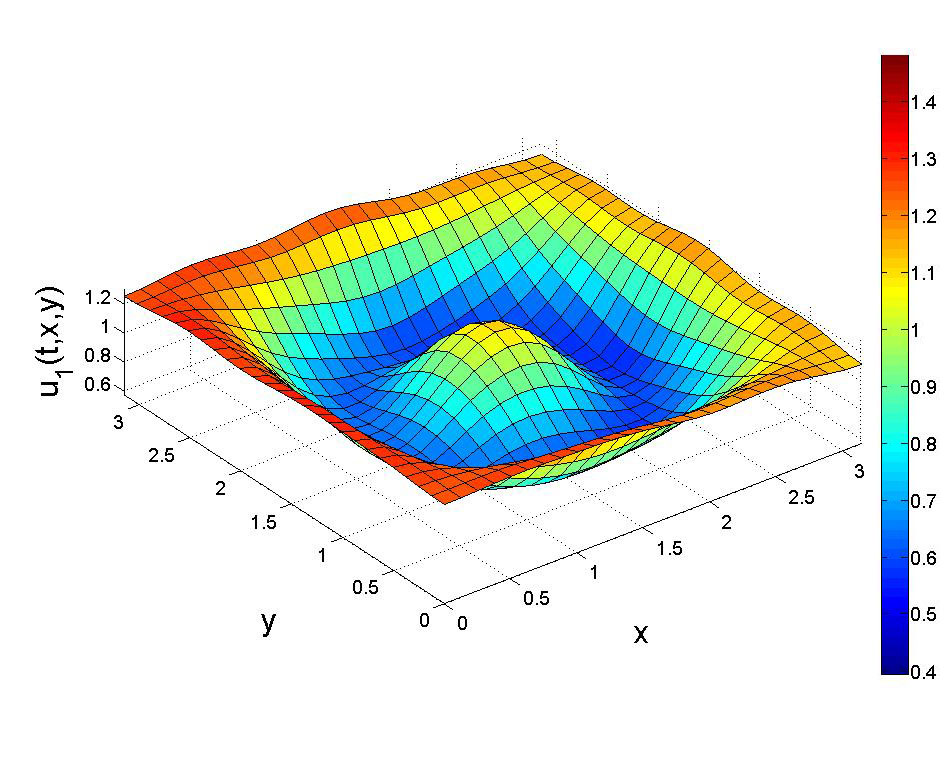}
   \includegraphics[width=3.0in]{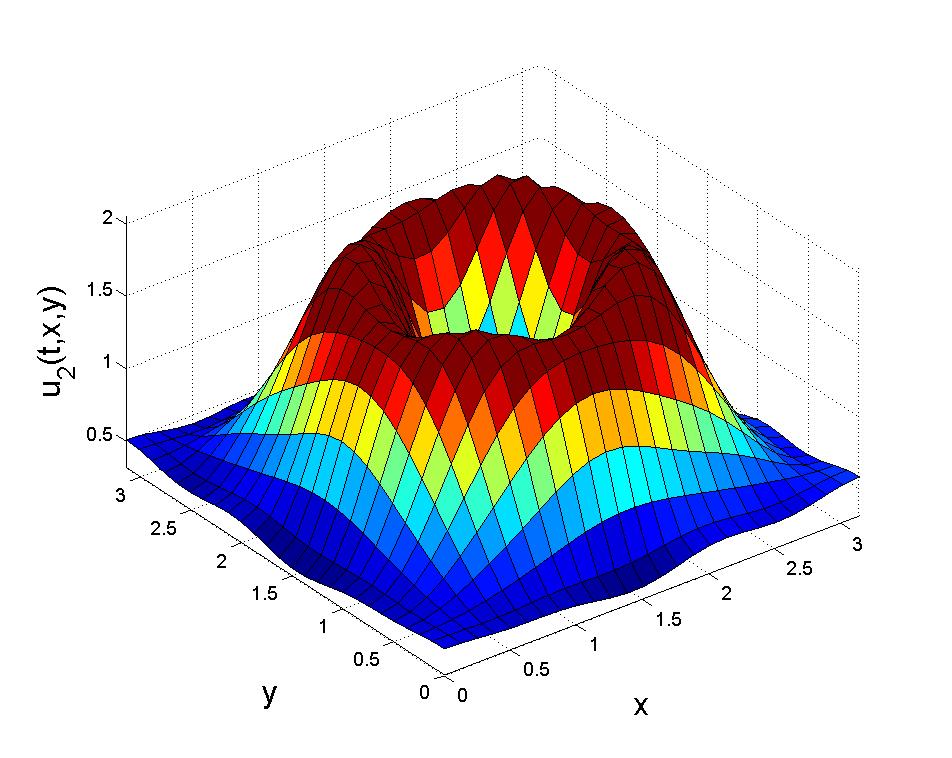}
 \end{tabular}
 \caption{The effects of a high cross-diffusion effect of solitary bees on honey bees ($\alpha_1=\alpha_2=\beta_{11}=\beta_{21}=\beta_{22}=1$, $\gamma_1=\gamma_2=5$, $\beta_{12}=10$). Honey bees ($u_1$) are in low densities in areas where solitary bees ($u_2$) are in high densities and honey bees are found in high densities in areas where solitary bees are in low densities, thus demonstrating the avoidance effects of cross-diffusion \cite{Yong2012}. \label{bees}}
\end{figure}

The use of cross-diffusion to model spatially explicit epidemics has been studied in the past (see \cite{Li2008,Nallaswamy1982,Sun2009,Wang2010}). Most recently, the role of density-dependent cross-diffusion in epidemiology has been explored numerically by Berres and Ruiz-Baier \cite{Berres2011} via the model
\begin{equation}
\begin{aligned}
\frac{\partial S}{\partial t}&=rS\left(1-\frac{S}{K}\right)-\beta \frac{SI}{S+I}+D_{S}\nabla^2 S+c\nabla\cdot(S\nabla I)\\
\frac{\partial I}{\partial t}&=\beta \frac{SI}{S+I}-\gamma I+D_{I}\nabla^2 I
\end{aligned}
\label{BerresModel}
\end{equation}
where $K$ is the carrying capacity, $r$ is the intrinsic birth rate, $\beta$ is the transmission rate, $\gamma$ is the recovery rate, $D_{S}$ and $D_{I}$ are the susceptible and infective diffusion coefficients, respectively, and $c$ is the cross-diffusion coefficient.

Following the approach in \cite{Yong2012}, the role of density-dependent cross-diffusion in the aggregation of individuals according to epidemiological states during a nefarious disease outbreak is carried out below. We expand on the type of cross-diffusion model in \cite{Bichara2015sis} via the use only of a population of $S$-individuals (susceptible) and $I$-individuals (infectives), that is, symptomatic infectious individuals. The model below assumes that symptoms generate avoidance.

\subsection{$SI$ model with diffusion}
As a starting point, let the densities for populations susceptible to a disease and infective with a disease, at time $t$ and position $(x,y)\in\Omega$ be $S(x,y,t)$ and $I(x,y,t)$, respectively. We assume the model takes the form of the following reaction-diffusion model

Incorporating the reaction and diffusion terms leads to the following $SI$ epidemiological system
\begin{equation}
\begin{aligned}
\frac{\partial S}{\partial t}&=rS^{\alpha_1}\left(1-\frac{S^{\alpha_1}}{K}\right)-\beta \frac{SI}{(S+I)^{\alpha_2}}+D_{S}\nabla^2 S+c\nabla\cdot(S\nabla I)\\
\frac{\partial I}{\partial t}&=\beta \frac{SI}{(S+I)^{\alpha_2}}-\gamma I+D_{I}\nabla^2 I.
\end{aligned}
\label{general system}
\end{equation}
Finally, it is further assumed that we have a closed system involving no external input; thus the use of Neumann boundary conditions
\begin{equation}
\frac{\partial }{\partial {\nu}}S=\frac{\partial }{\partial {\nu}}I=0,
\end{equation}
is acceptable. The initial conditions are as follows
\begin{equation}
S(x,y,0)=S_0(x,y) \qquad \mbox{and} \qquad I(x,y,0)=I_0(x,y).
\end{equation}

Whenever $D_{S}$ and $D_{I}$ are the dominant coefficients, System \eqref{general system}  reduces essentially to the heat equation, which under Neumann boundary conditions will go to the average of the initial data as $t\rightarrow\infty$ \cite{Ni1998}.

\subsection{Effects of recruitment}\label{section3}
Next we examine System \eqref{general system} with $\alpha_2=1$, that is, we focus on the study of the effects of recruitment. When $\alpha_1=1$, logistic recruitment, System \eqref{general system} reduces to System \eqref{BerresModel}. 
\begin{lemma}\label{BerresModel_Lemma}
System \eqref{BerresModel} will support Turing's diffusive instability if
\begin{equation}
\mathcal{R}_0:=\frac{\beta}{\gamma}<\frac{r}{\gamma}+1,\label{cross-diffusion cond1}
\end{equation}
\begin{equation}
Z:=-D_{S}\frac{(\beta-\gamma)}{\mathcal{R}_0}-D_{I}r+D_{I}\beta-D_{I}\frac{\gamma}{\mathcal{R}_0}-\frac{(\beta-\gamma)^2}{\beta}c\frac{K(r-(\beta-\gamma))}{r}>0,\label{cross-diffusion cond2}
\end{equation}
and
\begin{equation}
Z^2\geq D_{S}D_{I}\gamma(\beta-\gamma)(r-\beta+\gamma).\label{cross-diffusion cond3}
\end{equation}
\end{lemma}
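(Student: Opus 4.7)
The plan is to perform the standard Turing--Segel stability analysis, with the one non-routine ingredient being the linearization of the cross-diffusion term $c\nabla\!\cdot\!(S\nabla I)$. The three steps I would carry out are: (i) locate the endemic equilibrium $(S^*,I^*)$ of the kinetics and compute the Jacobian $J$; (ii) linearize the full PDE about $(S^*,I^*)$ and derive the dispersion relation via Fourier modes; and (iii) match the resulting sign conditions to \eqref{cross-diffusion cond1}--\eqref{cross-diffusion cond3}.

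Setting the right-hand sides of \eqref{BerresModel} (with the diffusion terms dropped) to zero and using $\beta S^*/(S^*+I^*)=\gamma$ at a nontrivial equilibrium gives $I^*=(\beta-\gamma)S^*/\gamma$ and, after substitution into the $S$-equation, $S^*=K(r-(\beta-\gamma))/r$. Positivity of $(S^*,I^*)$ requires $\mathcal{R}_0>1$ together with $r>\beta-\gamma$, the latter being exactly \eqref{cross-diffusion cond1}. Using the equilibrium ratios $S^*/(S^*+I^*)=\gamma/\beta$ and $I^*/(S^*+I^*)=(\beta-\gamma)/\beta$ when differentiating $\beta SI/(S+I)$, the Jacobian entries simplify to
\[J_{11}=\frac{(\beta-\gamma)(\beta+\gamma)}{\beta}-r,\quad J_{12}=-\frac{\gamma^2}{\beta},\quad J_{21}=\frac{(\beta-\gamma)^2}{\beta},\quad J_{22}=-\frac{\gamma(\beta-\gamma)}{\beta},\]
whence $\mathrm{tr}(J)=(\beta-\gamma)-r$ and $\det(J)=\gamma(\beta-\gamma)(r-\beta+\gamma)/\beta$. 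Condition \eqref{cross-diffusion cond1} together with $\mathcal{R}_0>1$ therefore secures the kinetic stability requirements $\mathrm{tr}(J)<0$ and $\det(J)>0$.

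Next I linearize the PDE in $s=S-S^*$, $i=I-I^*$. The key observation is that $c\nabla\!\cdot\!(S\nabla I)=c\nabla S\cdot\nabla I+cS\nabla^2 I$, and since $\nabla S^*=\nabla I^*=0$ the gradient product $\nabla s\cdot\nabla i$ is quadratic in the perturbation, so linearization contributes only $cS^*\nabla^2 i$ to the $s$-equation. Plugging in the ansatz $(s,i)\propto e^{iq\cdot x+\sigma t}$ yields a characteristic equation $\sigma^2-\mathrm{tr}(M_q)\sigma+\det(M_q)=0$, where $\mathrm{tr}(M_q)=\mathrm{tr}(J)-(D_S+D_I)q^2<0$ is automatic under \eqref{cross-diffusion cond1} and
\[\det(M_q)=D_S D_I\,q^4-\bigl(D_I J_{11}+D_S J_{22}-cS^*J_{21}\bigr)q^2+\det(J).\]
Substituting $S^*$ and the $J_{ij}$ computed above shows that the coefficient of $q^2$ in $-\det(M_q)$ is precisely the quantity $Z$ of \eqref{cross-diffusion cond2}. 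Diffusive instability is equivalent to $\det(M_q)<0$ for some $q^2>0$; since $\det(M_q)$ is an upward-opening quadratic in $x=q^2$, this happens if and only if $Z>0$ together with the discriminant inequality $Z^2\ge 4D_SD_I\det(J)$, i.e.\ the pair \eqref{cross-diffusion cond2}--\eqref{cross-diffusion cond3}.

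The main obstacle is bookkeeping rather than conceptual: the partial derivatives of $\beta SI/(S+I)$ at $(S^*,I^*)$ must be simplified carefully via the equilibrium ratios into the compact $J_{ij}$ above, and the coefficient of $q^2$ in $\det(M_q)$ must be patiently unpacked so that it visibly equals the $Z$ in \eqref{cross-diffusion cond2}. The only delicate conceptual step is the linearization of the cross-diffusion term, which could be mishandled by retaining $c\nabla s\cdot\nabla i$; once one recognizes that this product is second order in the perturbation and drops out, the remainder is the classical Turing discriminant argument applied to the dispersion relation above.
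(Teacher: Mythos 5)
Your proposal is correct and follows essentially the same route as the paper: same endemic equilibrium, same Jacobian entries, the same effective diffusion matrix with $\hat D_{12}=cS^*=cK(r-(\beta-\gamma))/r$ obtained by linearizing $c\nabla\cdot(S\nabla I)$, and the same quantity $Z$ emerging as the $q^2$-coefficient in the dispersion relation --- the only difference being that you rederive the Turing conditions from the characteristic polynomial rather than quoting them from Kumar--Horsthemke and Zemskov et al.\ as the paper does. One point worth noting: your discriminant condition $Z^2\ge 4D_SD_I\det J=\tfrac{4}{\beta}D_SD_I\gamma(\beta-\gamma)(r-\beta+\gamma)$ is the one that actually follows from the quadratic in $q^2$, whereas the inequality stated in \eqref{cross-diffusion cond3} drops the factor $4/\beta$.
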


\begin{proof}
\smartqed
To show Turing's diffusive instability, we first examine System \eqref{BerresModel} without diffusion terms ($D_{S}=D_{I}=c=0$). The corresponding endemic equilibrium point is
\begin{equation}
(S^*,I^*)=\left(\frac{K(r-\beta+\gamma)}{r},\frac{K(r-\beta+\gamma)(\beta-\gamma)}{r\gamma}\right),
\end{equation}
the basic reproduction number is
\begin{equation}
\mathcal{R}_0=\frac{\beta}{\gamma},
\end{equation} 
and the Jacobian of System \eqref{BerresModel} without diffusion evaluated at the endemic equilibrium is
\begin{equation}
J=\left(\begin{array}{cc}
-r+\beta-\frac{\gamma^2}{\beta}	&	-\frac{\gamma^2}{\beta}\\
\frac{(\beta-\gamma)^2}{\beta}	&	-\frac{\gamma(\beta-\gamma)}{\beta}
\end{array}\right).
\end{equation}
By \cite{Kumar2011} and \cite{Zemskov2013}, Turing's diffusive instability occurs if the following four conditions are satisfied
\begin{eqnarray}
&&\mbox{tr} J=J_{11}+J_{22}<0\label{condJ1}\\
&&\det J=J_{11}J_{22}-J_{12}J_{21}>0\label{condJ2}\\
&&\det\hat{D}=\hat{D}_{11}\hat{D}_{22}-\hat{D}_{12}\hat{D}_{21}>0\label{cond1}\\
&&(\hat{D}_{11}-\hat{D}_{22})^2+4\hat{D}_{12}\hat{D}_{21}\geq0\label{cond2}\\
&&\hat{D}_{11}J_{22}+\hat{D}_{22}J_{11}-\hat{D}_{12}J_{21}-\hat{D}_{21}J_{12}>0\label{cond3}\\
&&(\hat{D}_{11}J_{22}+\hat{D}_{22}J_{11}-\hat{D}_{12}J_{21}-\hat{D}_{21}J_{12})^2-4\det\hat{D}\det J\geq0\label{cond4}
\end{eqnarray}
where the diffusion matrix is given by
\begin{equation*}
\hat{D}=\left(\begin{array}{cc}\hat{D}_{11} & \hat{D}_{12}\\ \hat{D}_{21} & \hat{D}_{22}\end{array}\right)=\left(\begin{array}{cc}D_{S}&c\frac{K (r-(\beta -\gamma))}{r}\\0&D_{I}\end{array}\right).
\end{equation*}
Notice that in the absence of cross-diffusion, $D_{12}=D_{21}=0$, Conditions \eqref{condJ1}, \eqref{condJ2}, and \eqref{cond4} become Conditions \eqref{Turing_no_cross-diffusion1}, \eqref{Turing_no_cross-diffusion2}, and \eqref{Turing_no_cross-diffusion3} from Model \eqref{stage}.

Conditions \eqref{condJ1} and \eqref{condJ2} hold if
\begin{equation}
\beta<r+\gamma,
\end{equation}
which is equivalent to
\begin{equation}
\mathcal{R}_0<\frac{r}{\gamma}+1.
\end{equation}
thus we must have that
\begin{equation}
1<\mathcal{R}_0<\frac{r}{\gamma}+1.
\end{equation}
It can be shown that Conditions \eqref{cond1} and \eqref{cond2} hold if $D_{S},D_{I}\neq0$, while Condition \eqref{cond3} holds if
\begin{equation}
Z:=-D_{S}\frac{(\beta-\gamma)}{\mathcal{R}_0}-D_{I}r+D_{I}\beta-D_{I}\frac{\gamma}{\mathcal{R}_0}-\frac{(\beta-\gamma)^2}{\beta}c\frac{K(r-(\beta-\gamma))}{r}>0
\end{equation}
and Condition \eqref{cond4} holds if
\begin{equation}
Z^2\geq D_{S}D_{I}\gamma(\beta-\gamma)(r-\beta+\gamma)
\end{equation}
\qed
\end{proof}

When $\alpha_1=0$, constant recruitment, System \eqref{general system} becomes
\begin{equation}
\begin{aligned}
\frac{\partial S}{\partial t}&=\Lambda-\beta \frac{SI}{S+I}+D_{S}\nabla^2 S+c\nabla\cdot(S\nabla I)\\
\frac{\partial I}{\partial t}&=\beta \frac{SI}{S+I}-\gamma I+D_{I}\nabla^2 I
\end{aligned}
\label{constant recruitment}
\end{equation}
where $\Lambda=r\left(1-\frac{1}{K}\right)$.

\begin{lemma}
Model \eqref{constant recruitment} does not support Turing's diffusive instability.
\end{lemma}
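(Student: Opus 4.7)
The plan is to follow the same blueprint used in Lemma~\ref{BerresModel_Lemma}: set the diffusion terms to zero, find the endemic equilibrium, compute the reaction Jacobian and the (upper-triangular) effective diffusion matrix produced by linearizing the cross-diffusion term, then check the six Turing conditions \eqref{condJ1}--\eqref{cond4} and exhibit one that is violated for every admissible parameter choice.

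First I would solve $\Lambda = \beta S^* I^*/(S^*+I^*)$ together with $\beta S^*/(S^*+I^*) = \gamma$ to obtain the endemic equilibrium
\begin{equation*}
(S^*,I^*) = \left(\frac{\Lambda}{\beta-\gamma},\;\frac{\Lambda}{\gamma}\right),
\end{equation*}
which is biologically meaningful iff $\mathcal{R}_0 := \beta/\gamma > 1$. Using the identity $S^*+I^* = \beta S^*/\gamma$ to simplify the derivatives of the standard incidence $\beta SI/(S+I)$, the reaction Jacobian at $(S^*,I^*)$ is
\begin{equation*}
J = \begin{pmatrix} -(\beta-\gamma)^2/\beta & -\gamma^2/\beta \\[2pt] (\beta-\gamma)^2/\beta & -\gamma(\beta-\gamma)/\beta \end{pmatrix},
\end{equation*}
whose trace is $-(\beta-\gamma)<0$ and determinant is $\gamma(\beta-\gamma)^2/\beta>0$, so Conditions \eqref{condJ1}--\eqref{condJ2} hold. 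Linearizing $c\nabla\!\cdot(S\nabla I)$ at equilibrium yields $\hat D_{11}=D_S$, $\hat D_{22}=D_I$, $\hat D_{12}=cS^* = c\Lambda/(\beta-\gamma)$, and $\hat D_{21}=0$, so Conditions \eqref{cond1}--\eqref{cond2} are automatic.

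The crux is Condition \eqref{cond3}. Substituting the entries above and factoring gives
\begin{equation*}
\hat D_{11}J_{22}+\hat D_{22}J_{11}-\hat D_{12}J_{21}-\hat D_{21}J_{12}
 \;=\; -\frac{\beta-\gamma}{\beta}\bigl[\gamma D_S + (\beta-\gamma)D_I + c\Lambda\bigr].
\end{equation*}
Since $\beta>\gamma$ in the endemic regime and $D_S,D_I,c,\Lambda\ge0$ with the diffusion coefficients strictly positive, the bracket is strictly positive and the prefactor is strictly negative. Hence the left-hand side is strictly negative, Condition \eqref{cond3} fails for every admissible parameter choice, and Turing instability is excluded.

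The main obstacle I anticipate is the sign of the cross-diffusion contribution $-\hat D_{12}J_{21}$: in principle, a cross-diffusion term with the ``right'' sign could rescue \eqref{cond3}, so the argument hinges on verifying that it does not. Once one observes that $J_{21}=(\beta-\gamma)^2/\beta>0$ (the incidence is increasing in $S$ at the endemic state) and $\hat D_{12}=cS^*>0$, the cross-diffusion contribution only pushes the critical linear combination further from the required positivity, and the remaining algebra is routine.
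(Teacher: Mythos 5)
Your proposal is correct and follows essentially the same route as the paper: same endemic equilibrium, same Jacobian, same upper-triangular diffusion matrix, and the same conclusion that Condition \eqref{cond3} fails because $\beta>\gamma$ forces the critical linear combination to be negative. In fact you go slightly further than the paper, which merely asserts the failure of \eqref{cond3}, whereas you exhibit the explicit factorization $-\tfrac{\beta-\gamma}{\beta}\bigl[\gamma D_S+(\beta-\gamma)D_I+c\Lambda\bigr]$ that makes the sign obvious.
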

\begin{proof}
\smartqed
The endemic equilibrium is
\begin{equation}
(S^{*},I^{*})=\left(\frac{\Lambda }{\beta -\gamma },\frac{\Lambda}{\gamma}\right).
\end{equation}
The basic reproductive number is
\begin{equation}
\mathcal{R}_0=\frac{\beta}{\gamma}.
\end{equation}
and we assume $\beta>\gamma$ so that $\mathcal{R}_0>1$.

The Jacobian is
\begin{equation}
J=\left(\begin{array}{cc}
-\frac{(\beta-\gamma)^2}{\beta}	&	-\frac{\gamma^2}{\beta}\\
\frac{(\beta-\gamma)^2}{\beta}	&	-\frac{\gamma(\beta-\gamma)}{\beta}
\end{array}\right),
\end{equation}
and the diffusion matrix is
\begin{equation*}
\hat{D}=\left(\begin{array}{cc}D_{S}&c\frac{\Lambda}{\beta -\gamma}\\0&D_{I}\end{array}\right).
\end{equation*}
Note that Condition \eqref{cond3} fails due to the assumption that $\beta>\gamma$, and thus from \cite{Kumar2011} and \cite{Zemskov2013} we know that Turing diffusive instability is not possible.
\qed
\end{proof}
In short, logistic recruitment seems critical for supporting Turing's diffusive instability in the proposed cross-diffusion model.

\subsection{Effects of incidence functions}

The literature has often focused on modeling epidemics using the so called ``mass-action" law ($\alpha_2=0$) or ``standard" incidence ($\alpha_2=1$). In this section, we explore the role of this assumption in support of diffusive instability in our setting.

When $\alpha_2=0$, the mass action law comes into play and System \eqref{general system} becomes
\begin{equation}
\begin{aligned}
\frac{\partial S}{\partial t}&=rS\left(1-\frac{S}{K}\right)-\beta SI+D_{S}\nabla^2 S+c\nabla\cdot(S\nabla I)\\
\frac{\partial I}{\partial t}&=\beta SI-\gamma I+D_{I}\nabla^2 I
\end{aligned}
\label{mass action}
\end{equation}
\begin{lemma}
System \eqref{mass action} will not support Turing's diffusive instability.
\end{lemma}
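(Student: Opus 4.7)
The plan is to follow exactly the template of the constant recruitment lemma: compute the endemic equilibrium and the Jacobian of the reaction part of System \eqref{mass action}, write down the diffusion matrix (including the cross-diffusion contribution), and verify which of the six Kumar/Zemskov conditions \eqref{condJ1}--\eqref{cond4} breaks. I expect Condition \eqref{cond3} to be the one that fails, in direct analogy with the constant recruitment case.

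First, I would set the time derivatives to zero in the kinetic (diffusion-free) part. The equation $\beta S I - \gamma I = 0$ forces $S^{*} = \gamma/\beta$, and substituting back into the logistic equation yields $I^{*} = \frac{r}{\beta}\bigl(1 - \tfrac{\gamma}{\beta K}\bigr)$. Positivity of $I^{*}$ is equivalent to $\mathcal{R}_0 := \beta K/\gamma > 1$. A direct differentiation then gives
\begin{equation*}
J = \begin{pmatrix} -\dfrac{r\gamma}{\beta K} & -\gamma \\[4pt] r\bigl(1-\tfrac{\gamma}{\beta K}\bigr) & 0 \end{pmatrix},
\end{equation*}
the crucial feature being that $J_{22}=0$, since $\beta S^{*}-\gamma = 0$ at the endemic equilibrium. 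With this Jacobian, Conditions \eqref{condJ1} and \eqref{condJ2} are easily checked: the trace is strictly negative and the determinant equals $\gamma r(1 - \gamma/(\beta K))>0$ under $\mathcal{R}_0>1$, so the endemic state is stable in the absence of diffusion.

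Next I would linearize the cross-diffusion term $c\,\nabla\!\cdot\!(S\nabla I)$ about the equilibrium. To leading order this contributes $c S^{*}\,\nabla^{2} I$, producing the diffusion matrix
\begin{equation*}
\hat{D} = \begin{pmatrix} D_S & cS^{*} \\ 0 & D_I \end{pmatrix} = \begin{pmatrix} D_S & c\gamma/\beta \\ 0 & D_I \end{pmatrix}.
\end{equation*}
Conditions \eqref{cond1} and \eqref{cond2} are immediate since $\hat D_{21}=0$ makes $\det\hat D = D_S D_I>0$ and $(\hat D_{11}-\hat D_{22})^2+4\hat D_{12}\hat D_{21}=(\hat D_{11}-\hat D_{22})^2\ge 0$.

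The entire argument then hinges on Condition \eqref{cond3}. Substituting the entries above and using $J_{22}=0$ gives
\begin{equation*}
\hat D_{11}J_{22}+\hat D_{22}J_{11}-\hat D_{12}J_{21}-\hat D_{21}J_{12} = -\,D_I\frac{r\gamma}{\beta K}\;-\;c\,\frac{\gamma}{\beta}\,r\Bigl(1-\frac{\gamma}{\beta K}\Bigr),
\end{equation*}
which is the sum of two non-positive terms (strictly negative for $D_I,c>0$ and $\mathcal{R}_0>1$). Hence Condition \eqref{cond3} fails, and by the criterion from \cite{Kumar2011,Zemskov2013} Turing diffusive instability is impossible. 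The only minor pitfall I anticipate is being careful that the cross-diffusion enters $\hat D$ as $cS^{*}$ (evaluated at the equilibrium) after linearization; once this is handled, the rest is a short calculation essentially identical to the constant recruitment case, with $J_{22}=0$ playing the same role that $\beta>\gamma$ played there.
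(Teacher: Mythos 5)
Your proposal is correct and follows essentially the same route as the paper: same endemic equilibrium, same Jacobian with $J_{22}=0$, same diffusion matrix $\hat D$ with $\hat D_{12}=cS^{*}=c\gamma/\beta$, and the same conclusion that Condition \eqref{cond3} fails under $\mathcal{R}_0>1$. You simply spell out the sign computation for Condition \eqref{cond3} more explicitly than the paper does, which is a welcome clarification rather than a deviation.
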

\begin{proof}
\smartqed
The endemic equilibrium is
\begin{equation}
(S^{*},I^{*})=\left(\frac{\gamma}{\beta },\frac{r(K\beta-\gamma)}{K\beta^2}\right),
\end{equation}
where the basic reproductive number is
\begin{equation}
\mathcal{R}_0=\frac{\beta K}{\gamma}.
\end{equation}
The Jacobian evaluated at the endemic equilibrium is
\begin{equation}
J=\left(\begin{array}{cc}
-\frac{r}{\mathcal{R}_0}	&	-\gamma\\
r\left(1-\frac{1}{\mathcal{R}_0}\right)	&	0
\end{array}\right).
\end{equation}
The diffusion matrix is
\begin{equation*}
\hat{D}=\left(\begin{array}{cc}D_{S}&c\frac{\gamma}{\beta}\\0&D_{I}\end{array}\right).
\end{equation*}
Notice that Condition \eqref{cond3} fails if $\mathcal{R}_0>1$ is imposed. Thus \eqref{mass action} will not result in Turing's diffusive instability.
\qed
\end{proof}
The case $\alpha_2=1$, standard incidence, corresponds to the case when System \eqref{general system} becomes System \eqref{BerresModel}, and so, Turing's diffusive instability is possible. The use of standard incidence seems critical to the support of Turing's diffusive instability in our setting.

\subsection{Necessary and sufficient conditions}
\begin{theorem}
For a density dependent cross-diffusion SI model of the form System \eqref{general system}, logistic recruitment and standard incidence functions are necessary for Turing's diffusive instability.
\end{theorem}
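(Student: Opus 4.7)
The plan is to read this theorem as a summary/corollary assembling the three preceding lemmas rather than as a fresh computation. The density-dependent cross-diffusion model \eqref{general system} is parametrized by the exponent pair $(\alpha_1,\alpha_2)$; the three lemmas of Section \ref{section3} and the following subsection each pin down one corner of the parameter lattice $\{0,1\}\times\{0,1\}$, so the proof should be organized as a case analysis that stitches these together via the contrapositive: if Turing's diffusive instability is supported, then $\alpha_1=1$ (logistic) and $\alpha_2=1$ (standard) must both hold.

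Concretely, I would first fix the interpretation that ``logistic recruitment'' means $\alpha_1=1$ and ``standard incidence'' means $\alpha_2=1$, so that the four relevant regimes are exactly the four combinations of $\alpha_1,\alpha_2\in\{0,1\}$. I would then note that $(\alpha_1,\alpha_2)=(1,1)$ reduces \eqref{general system} to \eqref{BerresModel}, for which Lemma \ref{BerresModel_Lemma} establishes that Turing's diffusive instability is possible under conditions \eqref{cross-diffusion cond1}--\eqref{cross-diffusion cond3}, thereby showing the combination is non-vacuous. Next, the $(\alpha_1,\alpha_2)=(0,1)$ case reduces to \eqref{constant recruitment}, for which the constant-recruitment lemma showed condition \eqref{cond3} fails whenever $\beta>\gamma$ (i.e.\ whenever the endemic equilibrium is biologically meaningful), so no instability. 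Similarly, the $(\alpha_1,\alpha_2)=(1,0)$ case reduces to \eqref{mass action}, for which the mass-action lemma showed condition \eqref{cond3} fails once $\mathcal{R}_0>1$. Finally, for the remaining corner $(\alpha_1,\alpha_2)=(0,0)$, one only needs to check that the same two obstructions compound: the Jacobian now has $J_{11}=-(\beta-\gamma)^2/\beta$ exactly as in the constant-recruitment case, while the cross-diffusion entry is unchanged in sign, so condition \eqref{cond3} still fails for $\beta>\gamma$. Chaining these four cases gives the contrapositive: failure of either $\alpha_1=1$ or $\alpha_2=1$ forces one of conditions \eqref{cond1}--\eqref{cond4} to fail, so Turing instability is impossible.

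The main obstacle, and the one I would flag explicitly, is the scope of ``necessary.'' If the exponents $\alpha_1,\alpha_2$ are allowed to range continuously over $[0,\infty)$ (or even over $\mathbb{R}_{\geq 0}$), a genuine necessity statement requires excluding Turing instability for every $(\alpha_1,\alpha_2)\neq(1,1)$, not merely the two limiting cases already treated. A rigorous version would recompute the endemic equilibrium, Jacobian, and density-dependent diffusion matrix $\hat D$ in terms of general $\alpha_1,\alpha_2$, and then examine the sign of the quantity $\hat D_{11}J_{22}+\hat D_{22}J_{11}-\hat D_{12}J_{21}-\hat D_{21}J_{12}$ (the obstruction that fails in both negative lemmas) to show it cannot be made positive outside $(\alpha_1,\alpha_2)=(1,1)$; the algebra for this is tedious but follows the template already set by the two lemma proofs. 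Alternatively, and more honestly to the paper's framing, I would state the theorem as a corollary restricted to the four-point parameter grid the preceding lemmas actually cover, and write the proof in the short ``combine the three lemmas'' style above, which is what the excerpt appears to intend.
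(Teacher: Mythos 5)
Your proposal matches the paper's proof, which is literally just ``a direct result of the preceding lemmas''; the case analysis over $(\alpha_1,\alpha_2)\in\{0,1\}\times\{0,1\}$ is exactly the intended argument, and the appendix confirms that the exponents are indeed restricted to $\{0,1\}$, so your worry about continuous exponents does not arise in the paper's own framing. In fact you are more careful than the paper: the published lemmas only cover the corners $(1,1)$, $(0,1)$, and $(1,0)$, and your explicit check that condition \eqref{cond3} also fails at $(0,0)$ closes a small gap the paper leaves implicit.
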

\begin{proof}
\smartqed
The proof is a direct result of the preceding lemmas.
\qed
\end{proof}

See Figure \ref{cross-diffusion figure} for simulations for Model \eqref{BerresModel} carried out under Conditions \eqref{cross-diffusion cond1}-\eqref{cross-diffusion cond3}.

\begin{figure}
 \begin{tabular}{cc}
   \includegraphics[width=3.0in]{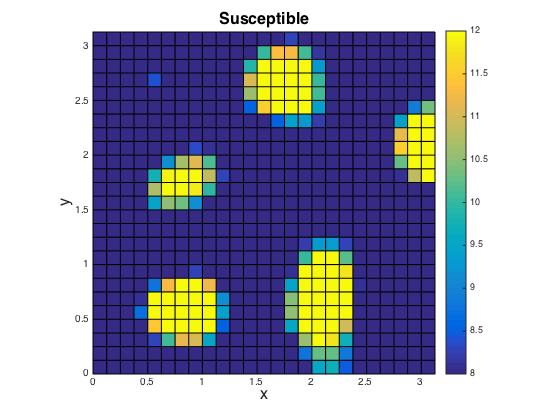}
   \includegraphics[width=3.0in]{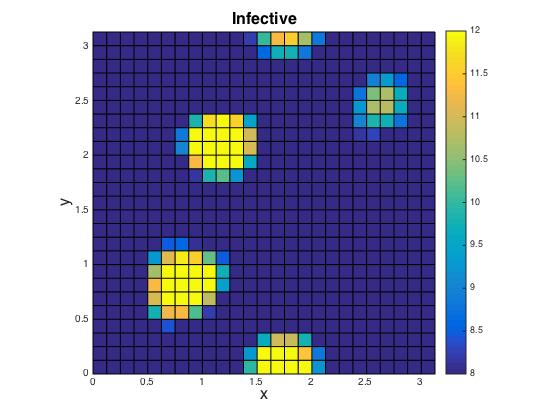}\\
   \includegraphics[width=3.0in]{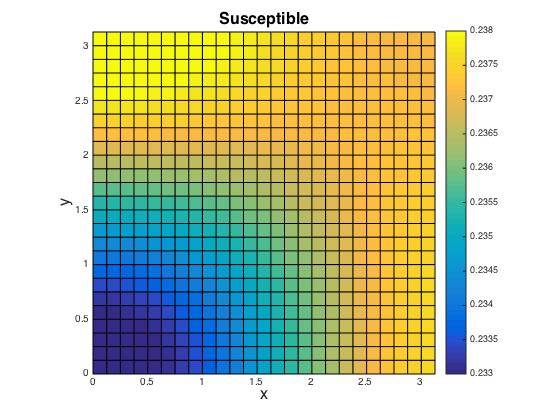}
   \includegraphics[width=3.0in]{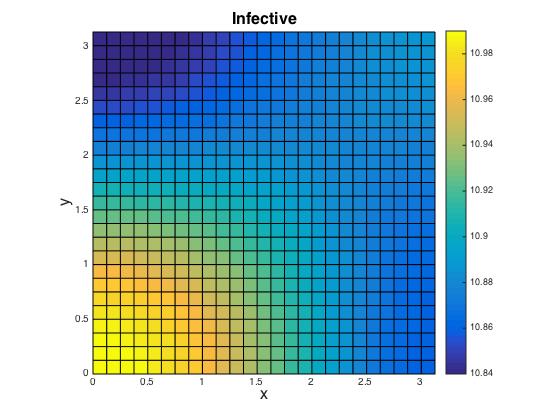}
 \end{tabular}
 \label{cross-diffusion figure}
 \caption{The distribution for Model \eqref{general system} under constant recruitment ($\alpha_1=0$) and mass action incidence function ($ \alpha_2=0$) under the conditions $D_S=0.1, D_I=2, c=0.02, r=0.4, K=100, \beta=0.5$, chosen so that Conditions \eqref{cross-diffusion cond1}-\eqref{cross-diffusion cond3} are satisfied for $t=0$ (top), $t=500$ (bottom). As time increases the distributions of both susceptible and infective populations have a homogeneous distribution, with no patches.}
\end{figure}

\section{Discussion and Conclusion}\label{Discussion and Conclusion}

We have proposed two models: a phenomenological model that examined the effects of an ``unusual" incidence function and a cross-diffusion model. Model (\ref{stage}) can be applied to the study of sexually transmitted diseases such as \emph{chlamydia} and \emph{gonorrhea} as well as to communicable diseases like \emph{leprosy} or possibly Ebola. In all three examples some form of \emph{social distancing} is assumed to be generated in response to the presence of symptoms. Model \eqref{stage} predicts that changes in behavior will result in spatial aggregation (via diffusive instability) and that such natural responses help, in fact, to reduce the population's levels of infection.

An $SI$ model with density dependent cross-diffusion, where susceptible individuals avoid increasing gradients of infective individuals is also considered. Indeed if the sign of the diffusion coefficient is negated, individuals would be attracted to increasing gradients of infective populations, as shown in the Keller-Segel model \cite{Keller1970,Keller1971a}, rather than repelled from infective populations. Using Model \eqref{general system} as a starting point, we examine the effects of the choice of recruitment and incidence functions and conclude that a logistic recruitment and standard incidence functions are necessary to have pattern formations, Turing's diffusive instability. Mass action incidence function, a popular choice, does not result in diffusive instability.

\section{Appendix: Derivation of the $SI$ model with diffusion}\label{derivation}
As a starting point, let the densities for populations susceptible to a disease and infective with a disease, at time $t$ and position $(x,y)\in\Omega$ be $S(x,y,t)$ and $I(x,y,t)$, respectively. We assume the model takes the form of the following reaction-diffusion model
\begin{equation}
\begin{aligned}
\frac{\partial S}{\partial t}&= -\nabla\cdot\mathbf{J}_1+f_1(S,I) & \mbox{in}~ \Omega \times (0,T),\\
\frac{\partial I}{\partial t}&=-\nabla\cdot\mathbf{J}_2 +f_2(S,I)& \mbox{in}~ \Omega \times (0,T),
\label{sys1}
\end{aligned}
\end{equation}
where $f_1, f_2$ and $\mathbf{J}_1, \mathbf{J}_2$ are the reaction and flux terms for the susceptible and infective populations, respectively. The reaction terms are modeled as follows:
\begin{equation*}
f_1(S,I)=rS^{\alpha_1}\left(1-\frac{S^{\alpha_1}}{K}\right)-\beta \frac{SI}{(S+I)^{\alpha_2}}, \qquad
f_2(S,I)=\beta \frac{SI}{(S+I)^{\alpha_2}}-\gamma I,
\end{equation*}
where $K$ is the carrying capacity, $r$ is the intrinsic birth rate, $\beta$ is the transmission rate, $\gamma$ is the recovery rate, $\alpha_1\in\{0,1\}$, $\alpha_2\in\{0,1\}$; $\alpha_1=1$ corresponding to logistic growth and $\alpha_1=0$ to constant recruitment; $\alpha_2=0$ accounts for mass-action transmission while $\alpha_2=1$ models standard incidence.

It is assumed that each population is influenced by increasing gradients of infectious individuals that result in the ``directional" dispersive migrations of each population towards its own type. Let $D_{S}$ and $D_{I}$ be the intrinsic-diffusion constants of the susceptible and infective populations, respectively, then the intrinsic dispersal forces of $S$ and $I$ in the flux are given by the gradient of the densities, $D_{S}\nabla S$, $D_{I}\nabla I$, respectively \cite{Nallaswamy1982}. The assumption that $D_{S},D_{I}\geq0$ means that the dispersal is in directions away from high densities, the last assumption justified by the tendency of susceptible to avoid increasing gradient populations of symptomatic infectious individuals, that is, it is assumed that they tend to move towards decreasing gradients of symptomatic individuals. The cross-diffusion coefficient measuring the impact of the infective population on the susceptible population is denoted by the constant $c\geq0$. Therefore, the cross-diffusion force of infective on susceptible populations in the flux is given by $cS\nabla I$. We further assume that there are no other cross-diffusion forces. Thus the flux for $S$ and $I$ are modeled as
\begin{equation*}
\begin{aligned}
\mathbf{J}_1&=-D_{S}\nabla S - cS\nabla I\\
\mathbf{J}_2&=-D_{I}\nabla I,
\end{aligned}
\end{equation*}
which takes the form of the celebrated Keller-Segel model \cite{Keller1970,Keller1971a}.

Incorporating the reaction and diffusion terms leads to the following $SI$ epidemiological system
\begin{equation}
\begin{aligned}
\frac{\partial S}{\partial t}&=rS^{\alpha_1}\left(1-\frac{S^{\alpha_1}}{K}\right)-\beta \frac{SI}{(S+I)^{\alpha_2}}+D_{S}\nabla^2 S+c\nabla\cdot(S\nabla I)\\
\frac{\partial I}{\partial t}&=\beta \frac{SI}{(S+I)^{\alpha_2}}-\gamma I+D_{I}\nabla^2 I.
\end{aligned}
\end{equation}
Finally, it is further assumed that we have a closed system involving no external input; thus the use of Neumann boundary conditions
\begin{equation}
\frac{\partial }{\partial {\nu}}S=\frac{\partial }{\partial {\nu}}I=0,
\end{equation}
is acceptable. The initial conditions are as follows
\begin{equation}
S(x,y,0)=S_0(x,y) \qquad \mbox{and} \qquad I(x,y,0)=I_0(x,y).
\end{equation}

Whenever $D_{S}$ and $D_{I}$ are the dominant coefficients, System \eqref{general system} reduces essentially to the heat equation, which under Neumann boundary conditions will go to the average of the initial data as $t\rightarrow\infty$ \cite{Ni1998}.

\end{document}